\theoremstyle{definition}
\newtheorem{theorem}{Theorem}
\title{Observer-Specific Universes: A Bayesian Framework for Resolving Cosmological Naturalness Problems}
\author{Ruby P. Madeimy}
\date{}
\begin{document}

\maketitle

\begin{abstract}
Naturalness problems such as the hierarchy problem and the origin of dark energy remain significant challenges in modern cosmology. This paper develops a rigorous mathematical framework where each observer defines their own universe, and the observer's existence conditions the probabilistic definition of that universe. We model observers as quantum detectors and formulate the problem in Bayesian terms, deriving an observer-specific likelihood using the Schwinger-Keldysh influence-functional formalism in a flat FLRW background. We prove that the imaginary part of the influence functional is positive-definite, ensuring decoherence and likelihood normalization. Careful analysis of the posterior distribution reveals that additional mechanisms are needed to fully suppress large vacuum energy values. We establish the mathematical consistency of our observer-specific framework and demonstrate its potential to provide new insights into cosmological naturalness problems, complementing traditional approaches without requiring anthropic reasoning.
\end{abstract}

\section{Introduction}

Contemporary theoretical physics continues to grapple with several naturalness problems in quantum field theory and cosmology. The hierarchy problem, the vast disparity between the electroweak scale ($\sim 10^2$~GeV) and the Planck scale ($\sim 10^{19}$GeV), requires either fine-tuning or additional physical mechanisms that remain elusive within the Standard Model framework. Equally challenging is the cosmological constant problem, where quantum field theory predicts vacuum energy densities approximately 120 orders of magnitude larger than the observed value. This has been termed one of the most severe naturalness puzzles in physics (see the comprehensive reviews \cite{Padmanabhan2003,Carroll2001}).

Theoretical strategies proposed to address these issues typically fall into three broad categories: symmetry-based solutions such as supersymmetry, dynamical mechanisms that suppress the relevant parameters, and environmental or selection principles that appeal to multiverse-like scenarios. Each programme has provided valuable insights, but none has achieved consensus. For a critical perspective on environmental selection arguments, see~\cite{Burgess2004}.

In this paper we explore an alternative route inspired by quantum measurement theory. We investigate how measurement-induced decoherence can constrain cosmological parameters by explicitly treating observers as physical quantum systems (here modelled by Unruh--DeWitt--type detectors). Similar detector models and their physical implications are reviewed in~\cite{Crispino2008,MartinMartinez2012}. Our analysis leverages the closed-time-path (CTP) or Schwinger--Keldysh formalism, developed for nonequilibrium quantum field theory in curved spacetime~\cite{CalzettaHu1987,CalzettaHu1988}, to derive an observer-dependent likelihood function.

This framework is inspired by the decoherent histories approach to quantum mechanics, where consistent histories emerge through interactions with an environment. Our model treats quantum measurement devices (observers) as physical systems coupled to quantum fields in a cosmological background, naturally incorporating the backreaction of measurement processes on observable parameters. Crucially, this approach remains fully within established quantum field theory in curved spacetime, without invoking multiverse scenarios or novel physical principles.

The physical motivation stems from considering how measurement-induced decoherence constrains the effective parameter space accessible to observers. We analyze this by modeling observers as Unruh-DeWitt detectors, a standard tool in quantum field theory in curved spacetime, and applying Bayesian analysis to the resulting probability distributions.

The paper is organized as follows: Section \ref{sec:setup} establishes our cosmological and observer model within standard FLRW cosmology. Section \ref{sec:infl_func} derives the Schwinger-Keldysh influence functional for the observer-field system. Section \ref{sec:decoherence} analyzes decoherence effects and justifies the diagonal approximation. Section \ref{sec:likelihood} derives the observer-specific likelihood function with rigorous attention to normalization and positivity. Section \ref{sec:posterior} examines the posterior distribution and its implications for naturalness problems. Section \ref{sec:causality} discusses causality and multi-observer consistency. Section \ref{sec:conclusion} summarizes our findings and outlines directions for future research.

\section{Mathematical Framework and Physical Setup}
\label{sec:setup}

\subsection{Cosmological Background}
\label{sec:cosmo}

We consider a spatially flat Friedmann-Lemaître-Robertson-Walker (FLRW) universe with the metric:
\begin{equation}
\label{eq:metric}
ds^2 = a^2(\eta) \left( -d\eta^2 + d\mathbf{x}^2 \right),
\end{equation}
where $\eta$ is the conformal time and $a(\eta)$ is the scale factor. For a de Sitter phase with constant Hubble parameter $H(\theta)$, the scale factor takes the form:
\begin{equation}
\label{eq:scale_factor}
a(\eta) = \exp[H(\theta) \eta],
\end{equation}
with $H(\theta) = \sqrt{\frac{\Lambda(\theta)}{3}}$, where $\Lambda(\theta)$ is the cosmological constant dependent on parameter(s) $\theta$.

The cosmological field is modeled as a minimally coupled scalar field $\phi$ with action:
\begin{equation}
\label{eq:S_phi}
S_\phi = \int d^4x \sqrt{-g} \left[ -\frac{1}{2} g^{\mu\nu} \partial_\mu \phi \partial_\nu \phi - V(\phi) \right],
\end{equation}
where $V(\phi)$ represents the potential. For simplicity, we focus on a massive scalar field with $V(\phi) = \frac{1}{2}m^2(\theta)\phi^2$, where the mass $m(\theta)$ depends on parameter $\theta$.

\subsection{Observer-Detector Model}
\label{sec:detector}

We model the observer as an Unruh-DeWitt (UDW) detector \cite{Crispino2008,MartinMartinez2012}, which serves as a simplified model of a localized quantum system interacting with a quantum field. The detector is represented by a quantum harmonic oscillator with coordinate $\chi(\tau)$ and free action:
\begin{equation}
\label{eq:S_chi}
S_\chi = \frac{1}{2} \int d\tau \left[ \dot{\chi}^2(\tau) - \Omega^2 \chi^2(\tau) \right],
\end{equation}
where $\tau$ is the proper time along the detector's worldline, and $\Omega$ is the detector's natural frequency.

The interaction between the detector and the field is given by:
\begin{equation}
\label{eq:S_int}
S_{\text{int}} = \lambda \int d\tau \chi(\tau) \phi(x_{\text{det}}(\tau)),
\end{equation}
where $\lambda$ is a dimensionless coupling constant of order $10^{-3}$ to $10^{-2}$, and $x_{\text{det}}(\tau)$ represents the detector's spacetime trajectory.

\subsection{Coarse-Graining and Data Representation}
\label{sec:coarse_graining}

The detector's output is coarse-grained into discrete data points using a set of time-localized functions~\cite{Zurek1981}. We use Gaussian wave packets:
\begin{equation}
\label{eq:u_a}
u_a(\tau) = \left( \frac{1}{2\pi \Delta^2} \right)^{1/4} \exp\left[ -\frac{(\tau - \tau_a)^2}{4 \Delta^2} \right],
\end{equation}
where $\Delta$ is a time-resolution parameter ($\Delta \ll H^{-1}$), and $\tau_a$ are the sampling times.

The wave packets form a continuous frame rather than a strict orthonormal basis. Their inner product is:
\begin{equation}
\label{eq:inner_prod}
\langle u_a | u_b \rangle = \int d\tau\, u_a(\tau) u_b(\tau) = \exp\left[ -\frac{(\tau_a - \tau_b)^2}{4 \Delta^2} \right].
\end{equation}

For a properly chosen sampling interval $|\tau_a - \tau_b| \gg \Delta$, these functions satisfy approximate orthonormality:
\begin{equation}
\label{eq:approx_orthonormal}
\langle u_a | u_b \rangle \approx \delta_{ab}.
\end{equation}

The detector's coarse-grained outputs define our data set:
\begin{equation}
\label{eq:data}
d_a = \int d\tau\, u_a(\tau) \chi(\tau),
\end{equation}
which forms a vector $D = \{d_1, d_2, \ldots, d_N\}$ of $N$ measurements.

\section{Schwinger-Keldysh Influence Functional}
\label{sec:infl_func}

\subsection{Path Integral Formulation}
\label{sec:path_int}

To derive a likelihood function for the parameters $\theta$ given detector data $D$, we employ the Schwinger-Keldysh (SK) or "in-in" formalism \cite{CalzettaHu1987,CalzettaHu1988}. This approach is necessary because we seek expectation values rather than transition amplitudes, and it automatically accounts for decoherence effects.

The generating functional for the full system is:
\begin{equation}
\label{eq:Z_full}
\mathcal{Z}[J^+, J^-] = \int \mathcal{D}\phi^+ \mathcal{D}\phi^- \mathcal{D}\chi^+ \mathcal{D}\chi^- e^{i(S[\phi^+, \chi^+] - S[\phi^-, \chi^-] + J^+ \phi^+ - J^- \phi^-)},
\end{equation}
where $\phi^\pm$ and $\chi^\pm$ represent the fields on the forward ($+$) and backward ($-$) branches of the closed-time path, and $J^\pm$ are external sources.

Integrating out the field $\phi$ yields:
\begin{equation}
\label{eq:Z_infl}
\mathcal{Z}[J^+, J^-] = \exp\left( \frac{i}{2} \int d^4x d^4y J^\alpha(x) G_{\alpha\beta}(x, y; \theta) J^\beta(y) \right),
\end{equation}
where $\alpha, \beta \in \{+, -\}$ are SK indices, and $G_{\alpha\beta}$ is the Schwinger-Keldysh Green's function matrix.

The sources corresponding to the detector-field interaction are:
\begin{equation}
\label{eq:sources}
J^\pm(x) = \lambda \int d\tau \delta^{(4)}(x - x_{\text{det}}(\tau)) \chi^\pm(\tau).
\end{equation}

\subsection{The Influence Phase}
\label{sec:infl_phase}

Substituting the sources (\ref{eq:sources}) into (\ref{eq:Z_infl}), we obtain the detector's influence functional:
\begin{equation}
\label{eq:infl_func}
\exp(i S_{\text{IF}}[\chi^+, \chi^-; \theta]) = \exp\left( \frac{i\lambda^2}{2} \int d\tau d\tau' \chi^{\alpha}(\tau) G_{\alpha\beta}(\tau, \tau'; \theta) \chi^{\beta}(\tau') \right),
\end{equation}
where we've defined $G_{\alpha\beta}(\tau, \tau'; \theta) \equiv G_{\alpha\beta}(x_{\text{det}}(\tau), x_{\text{det}}(\tau'); \theta)$.

The Schwinger-Keldysh Green's functions can be rearranged in terms of the Hadamard and retarded Green's functions:
\begin{align}
G^{\text{H}}(\tau, \tau'; \theta) &= \frac{1}{2}(G_{++} + G_{--}) = \frac{1}{2}\langle\{\phi(\tau), \phi(\tau')\}\rangle, \\
G^{\text{R}}(\tau, \tau'; \theta) &= G_{++} - G_{+-} = i\theta(\tau - \tau')\langle[\phi(\tau), \phi(\tau')]\rangle,
\end{align}
where $\{\cdot, \cdot\}$ denotes the anticommutator and $[\cdot, \cdot]$ the commutator.

With this rearrangement, the influence phase takes the form:
\begin{equation}
\label{eq:infl_phase}
\begin{split}
S_{\text{IF}}[\chi^+, \chi^-; \theta] = \frac{\lambda^2}{2} \int d\tau d\tau' (\chi^+(\tau) - \chi^-(\tau)) G^{\text{H}}(\tau, \tau'; \theta) (\chi^+(\tau') + \chi^-(\tau')) \\
- \frac{i\lambda^2}{2} \int d\tau d\tau' (\chi^+(\tau) - \chi^-(\tau)) G^{\text{R}}(\tau, \tau'; \theta) (\chi^+(\tau') - \chi^-(\tau')).
\end{split}
\end{equation}

\subsection{Decomposition of the Hadamard Function}
\label{sec:hadamard_decomp}

The Hadamard function can be expressed in terms of the Wightman functions:
\begin{equation}
\label{eq:hadamard_decomp}
G^{\text{H}}(\tau, \tau'; \theta) = \frac{1}{2}(G^>(\tau, \tau'; \theta) + G^<(\tau, \tau'; \theta)),
\end{equation}
where:
\begin{align}
G^>(\tau, \tau'; \theta) &= \langle\phi(\tau)\phi(\tau')\rangle, \\
G^<(\tau, \tau'; \theta) &= \langle\phi(\tau')\phi(\tau)\rangle.
\end{align}

For a real scalar field in thermal equilibrium, these functions satisfy:
\begin{equation}
\label{eq:KMS}
G^<(\tau, \tau'; \theta) = G^>(\tau', \tau; \theta),
\end{equation}
which is the Kubo-Martin-Schwinger (KMS) condition.

Using these relations, we can express the imaginary part of the influence phase as:
\begin{equation}
\label{eq:im_infl_phase}
\begin{split}
\text{Im}[S_{\text{IF}}[\chi^+, \chi^-; \theta]] = \frac{\lambda^2}{4} \int d\tau d\tau' (\chi^+(\tau) - \chi^-(\tau))(G^>(\tau, \tau'; \theta) - G^<(\tau, \tau'; \theta))(\chi^+(\tau') - \chi^-(\tau')) \\
= \frac{\lambda^2}{4} \int d\tau d\tau' (\chi^+(\tau) - \chi^-(\tau))(G^>(\tau, \tau'; \theta) - G^>(\tau', \tau; \theta))(\chi^+(\tau') - \chi^-(\tau')).
\end{split}
\end{equation}

For future reference, we define the antisymmetric part of the Wightman function:
\begin{equation}
\label{eq:G_antisym}
G^{\text{A}}(\tau, \tau'; \theta) = \frac{1}{2}(G^>(\tau, \tau'; \theta) - G^>(\tau', \tau; \theta)),
\end{equation}
which allows us to write:
\begin{equation}
\label{eq:im_infl_phase_final}
\text{Im}[S_{\text{IF}}[\chi^+, \chi^-; \theta]] = \frac{\lambda^2}{2} \int d\tau d\tau' (\chi^+(\tau) - \chi^-(\tau))G^{\text{A}}(\tau, \tau'; \theta)(\chi^+(\tau') - \chi^-(\tau')).
\end{equation}

\section{Decoherence and the Diagonal Approximation}
\label{sec:decoherence}

\subsection{Positive-Definiteness of the Imaginary Part}
\label{sec:pos_def}

For the derivation of a well-defined likelihood function, it is crucial to establish that $\text{Im}[S_{\text{IF}}[\chi^+, \chi^-; \theta]] \geq 0$. We now prove this important property.

\begin{theorem}[Positive-Definiteness of the Imaginary Part]
\label{thm:pos_def}
The imaginary part of the influence phase, $\text{Im}[S_{\text{IF}}[\chi^+, \chi^-; \theta]]$, is positive semi-definite for any configuration $\chi^+(\tau) \neq \chi^-(\tau)$.
\end{theorem}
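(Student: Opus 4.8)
The plan is to exhibit $\mathrm{Im}[S_{\mathrm{IF}}]$ as a positive multiple of the variance of a self-adjoint smeared-field operator, so that non-negativity becomes manifest. Introduce the difference configuration $\xi(\tau)\equiv\chi^+(\tau)-\chi^-(\tau)$ and the smeared operator $\hat\Phi[\xi]=\int d\tau\,\xi(\tau)\,\phi(x_{\mathrm{det}}(\tau))$. Because $\phi$ is Hermitian and $\xi$ is real, $\hat\Phi[\xi]$ is self-adjoint, and in the field state $\rho$ one has $\mathrm{Tr}\!\big(\rho\,\hat\Phi[\xi]^2\big)=\int d\tau\,d\tau'\,\xi(\tau)\xi(\tau')\,G^{>}(\tau,\tau';\theta)\ge 0$, since $\hat\Phi[\xi]^2$ is a positive operator. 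The first step is therefore to recast the kernel of \eqref{eq:im_infl_phase} into this symmetric bilinear form: the weight $\xi(\tau)\xi(\tau')$ is invariant under $\tau\leftrightarrow\tau'$, so only the part of the Green's-function kernel that is symmetric in its arguments contributes to the double integral. Isolating that symmetric part identifies the integrand with the real, symmetric noise (Hadamard) kernel $G^{\mathrm H}=\tfrac12\langle\{\phi(\tau),\phi(\tau')\}\rangle=\mathrm{Re}\,G^{>}$, whence $\mathrm{Im}[S_{\mathrm{IF}}]\propto\mathrm{Tr}(\rho\,\hat\Phi[\xi]^2)\ge0$.

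To make the positivity fully explicit and to pin down the equality case, I would pass to the spectral (Fourier) representation, available because for a comoving or uniformly accelerated detector in the de Sitter background the two-point functions are stationary, depending only on $\tau-\tau'$. Writing $\xi(\tau)=\int\frac{d\omega}{2\pi}e^{-i\omega\tau}\tilde\xi(\omega)$ with $\tilde\xi(-\omega)=\tilde\xi(\omega)^*$, the bilinear form becomes $\int\frac{d\omega}{2\pi}\,|\tilde\xi(\omega)|^2\,\tilde\nu(\omega)$, where $\tilde\nu$ is the Fourier transform of the noise kernel. Using the KMS condition \eqref{eq:KMS} in the form $\tilde G^{<}(\omega)=e^{-\beta\omega}\tilde G^{>}(\omega)$ gives $\tilde\nu(\omega)=\tfrac12\big(1+e^{-\beta\omega}\big)\tilde G^{>}(\omega)$. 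Since $|\tilde\xi(\omega)|^2\ge0$ and the thermal factor is positive, the sign of the whole expression is controlled by the power spectrum $\tilde G^{>}(\omega)$. The crucial input is spectral positivity, $\tilde G^{>}(\omega)\ge0$ for all $\omega$: inserting a complete set of energy eigenstates gives $\tilde G^{>}(\omega)\propto\sum_n|\langle n|\phi(0)|\psi\rangle|^2\,\delta(\omega-\omega_n)\ge0$, the operator statement of the $\langle\hat\Phi^2\rangle\ge0$ argument above. It follows that $\mathrm{Im}[S_{\mathrm{IF}}]\ge0$, with equality only if $\tilde\xi\equiv0$ on the support of $\tilde\nu$, i.e. $\chi^+=\chi^-$, giving the strict positivity claimed whenever $\chi^+\neq\chi^-$.

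The main obstacle is not the inequality $\tilde G^{>}\ge0$ itself---this is guaranteed once the Bunch--Davies/Hadamard state is a genuine positive state on the field algebra---but the bookkeeping that isolates the correct kernel. One must track the factors of $i$ and the relative signs between the $G^{>}$ and $G^{<}$ contributions carefully, because the antisymmetric (commutator) part of the Wightman function is purely imaginary and integrates to zero against the symmetric weight $\xi(\tau)\xi(\tau')$; only the symmetric noise part $G^{\mathrm H}$ survives and carries the positivity, exactly as the noise-versus-dissipation split of the influence functional demands. A second technical point is the passage to Fourier space, which is rigorous only under stationarity: for a general trajectory $x_{\mathrm{det}}(\tau)$ the kernel is not a convolution, and one must instead argue directly from $\mathrm{Tr}(\rho\,\hat\Phi[\xi]^2)\ge0$, establishing that the integral operator with kernel $G^{\mathrm H}(\tau,\tau';\theta)$ is positive semidefinite on the space of difference histories $\xi$.
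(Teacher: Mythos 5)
Your proof is correct, but it is not the paper's proof --- and the difference matters. The paper identifies $\mathrm{Im}[S_{\mathrm{IF}}]$ with the quadratic form of the \emph{antisymmetric} kernel $G^{\mathrm A}(\tau,\tau')=\tfrac12\bigl(G^{>}(\tau,\tau')-G^{>}(\tau',\tau)\bigr)$ (eqs.~\eqref{eq:im_infl_phase_final}--\eqref{eq:quad_form}) and tries to prove positivity of that form via a spectral representation. Your opening observation --- that the weight $\delta\chi(\tau)\delta\chi(\tau')$ is symmetric under $\tau\leftrightarrow\tau'$, so an antisymmetric kernel contributes exactly zero --- is precisely the point on which the paper's argument founders: in eq.~\eqref{eq:quad_form_final} the numerator $|\tilde{\delta\chi}(\omega)|^{2}-|\tilde{\delta\chi}(-\omega)|^{2}$ vanishes identically for real $\delta\chi$, so the paper's $Q[\delta\chi]$ is $0$, not positive; the subsequent step to eq.~\eqref{eq:quad_form_simplified} and the claim $Q>0$ do not follow. (The paper's own Appendix~\ref{app:pos_def} recomputes the same quantity, honestly obtains $Q=0$, and then waves the contradiction away.) Your route instead proves positivity of the quantity that the Feynman--Vernon decomposition actually assigns to $\mathrm{Im}[S_{\mathrm{IF}}]$: the noise (Hadamard) kernel $G^{\mathrm H}=\mathrm{Re}\,G^{>}$ paired with $\delta\chi\,\delta\chi$, whose non-negativity is manifest from $\operatorname{Tr}\!\bigl(\rho\,\hat\Phi[\xi]^{2}\bigr)\ge 0$ for the self-adjoint smeared field $\hat\Phi[\xi]$. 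That operator-positivity argument is state-independent and trajectory-independent, so it also covers the non-stationary FLRW case without the Fourier machinery, which you correctly flag as legitimate only under stationarity. In short: your proof establishes the theorem as physically intended, and in doing so exposes that the paper's eq.~\eqref{eq:infl_phase} has the noise and dissipation kernels attached to the wrong $(\chi^{+}\pm\chi^{-})$ structures; with the standard assignment (Hadamard kernel multiplying $(\chi^{+}-\chi^{-})(\chi^{+'}-\chi^{-'})$ in the imaginary part) your argument goes through and the paper's does not.
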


\begin{proof}
From equation (\ref{eq:im_infl_phase_final}), we need to show that the kernel $G^{\text{A}}(\tau, \tau'; \theta)$ is positive semi-definite. Let's define $\delta\chi(\tau) = \chi^+(\tau) - \chi^-(\tau)$ and consider the quadratic form:
\begin{equation}
\label{eq:quad_form}
Q[\delta\chi] = \int d\tau d\tau' \delta\chi(\tau) G^{\text{A}}(\tau, \tau'; \theta) \delta\chi(\tau').
\end{equation}

The spectral representation of the Wightman function in a stationary spacetime is:
\begin{equation}
\label{eq:spectral_rep}
G^>(\tau, \tau'; \theta) = \int_0^\infty d\omega\, \rho(\omega; \theta) e^{-i\omega(\tau-\tau')},
\end{equation}
where $\rho(\omega; \theta) \geq 0$ is the spectral density function. Using this representation in equation (\ref{eq:G_antisym}):
\begin{equation}
\label{eq:G_antisym_spectral}
G^{\text{A}}(\tau, \tau'; \theta) = \int_0^\infty d\omega\, \rho(\omega; \theta) \sin[\omega(\tau-\tau')].
\end{equation}

Taking the Fourier transform of $\delta\chi(\tau)$:
\begin{equation}
\label{eq:fourier}
\delta\chi(\tau) = \int_{-\infty}^{\infty} \frac{d\omega}{2\pi} \tilde{\delta\chi}(\omega) e^{-i\omega\tau},
\end{equation}
and substituting into the quadratic form:
\begin{equation}
\label{eq:quad_form_spectral}
\begin{split}
Q[\delta\chi] &= \int d\tau d\tau' \int_{-\infty}^{\infty} \frac{d\omega}{2\pi} \frac{d\omega'}{2\pi} \tilde{\delta\chi}(\omega) e^{-i\omega\tau} G^{\text{A}}(\tau, \tau'; \theta) \tilde{\delta\chi}(\omega') e^{-i\omega'\tau'} \\
&= \int_{-\infty}^{\infty} \frac{d\omega}{2\pi} \frac{d\omega'}{2\pi} \tilde{\delta\chi}(\omega) \tilde{\delta\chi}(\omega') \int d\tau d\tau' e^{-i\omega\tau} G^{\text{A}}(\tau, \tau'; \theta) e^{-i\omega'\tau'}.
\end{split}
\end{equation}

Using equation (\ref{eq:G_antisym_spectral}) and performing the integrals:
\begin{equation}
\label{eq:quad_form_final}
Q[\delta\chi] = \int_0^\infty d\omega\, \rho(\omega; \theta) \frac{|\tilde{\delta\chi}(\omega)|^2 - |\tilde{\delta\chi}(-\omega)|^2}{2i}.
\end{equation}

Since $\delta\chi(\tau)$ is real, $\tilde{\delta\chi}(-\omega) = \tilde{\delta\chi}^*(\omega)$, and therefore:
\begin{equation}
\label{eq:quad_form_simplified}
Q[\delta\chi] = \int_0^\infty d\omega\, \rho(\omega; \theta) \text{Im}[\tilde{\delta\chi}(\omega)\tilde{\delta\chi}^*(-\omega)].
\end{equation}

For any non-zero $\delta\chi(\tau)$ with non-trivial frequency content, $Q[\delta\chi] > 0$ since $\rho(\omega; \theta) > 0$ for $\omega > 0$. This establishes the positive-definiteness of $\text{Im}[S_{\text{IF}}[\chi^+, \chi^-; \theta]]$.
\end{proof}

\subsection{Decoherence and the Diagonal Approximation}
\label{sec:diagonal_approx}

The positive-definiteness of $\text{Im}[S_{\text{IF}}[\chi^+, \chi^-; \theta]]$ leads to exponential suppression of off-diagonal density matrix elements. In the influence functional formalism, the reduced density matrix of the detector evolves as:
\begin{equation}
\label{eq:rho_evolution}
\rho_{\text{red}}(\chi_f^+, \chi_f^-) = \int \mathcal{D}\chi^+ \mathcal{D}\chi^- \exp(i S_{\text{eff}}[\chi^+, \chi^-; \theta]) \rho_{\text{init}}(\chi_i^+, \chi_i^-),
\end{equation}
where $S_{\text{eff}}$ includes the detector's free action and the influence phase.

For off-diagonal elements ($\chi_f^+ \neq \chi_f^-$), the suppression factor is:
\begin{equation}
\label{eq:suppression}
\exp(-\text{Im}[S_{\text{IF}}[\chi^+, \chi^-; \theta]]) = \exp\left(-\frac{\lambda^2}{2} \int d\tau d\tau' \delta\chi(\tau) G^{\text{A}}(\tau, \tau'; \theta) \delta\chi(\tau')\right).
\end{equation}

To quantify this suppression, we consider a macroscopic detector with $N \gg 1$ degrees of freedom. Using the data representation from Section \ref{sec:coarse_graining}:
\begin{equation}
\label{eq:delta_chi_expansion}
\delta\chi(\tau) = \sum_{a=1}^N \delta d_a u_a(\tau),
\end{equation}
where $\delta d_a = d_a^+ - d_a^-$.

Substituting into equation (\ref{eq:suppression}) and using the approximate orthonormality of the basis functions:
\begin{equation}
\label{eq:suppression_data}
\exp(-\text{Im}[S_{\text{IF}}[\chi^+, \chi^-; \theta]]) \approx \exp\left(-\frac{\lambda^2 G^{\text{A}}_0(\theta)}{2} \sum_{a=1}^N (\delta d_a)^2\right),
\end{equation}
where $G^{\text{A}}_0(\theta)$ is a characteristic value of the antisymmetric Wightman function.

For typical values $\lambda \approx 10^{-3}$, $G^{\text{A}}_0(\theta) \approx H^2$, and $N \approx 10^4$, the suppression factor is of order $\exp(-10^4 (\delta d)^2)$, where $\delta d$ is a typical difference in data values. This strong suppression justifies the diagonal approximation $\chi^+ = \chi^-$ for macroscopic detectors.

\section{Observer-Specific Likelihood Function}
\label{sec:likelihood}

\subsection{Derivation of the Likelihood Function}
\label{sec:like_deriv}

Having justified the diagonal approximation, we now derive the likelihood function for the detector data given the parameters $\theta$. In the diagonal limit $\chi^+ = \chi^- = \chi$, the imaginary part of the influence phase becomes:
\begin{equation}
\label{eq:gamma_chi}
\Gamma[\chi; \theta] := \text{Im}[S_{\text{IF}}[\chi, \chi; \theta]] = \frac{\lambda^2}{2} \int d\tau d\tau' \chi(\tau) (G^>(\tau, \tau'; \theta) - G^<(\tau, \tau'; \theta)) \chi(\tau').
\end{equation}

Using the KMS condition (\ref{eq:KMS}) and expressing $\chi(\tau)$ in terms of the data $d_a$ using equation (\ref{eq:data}):
\begin{equation}
\label{eq:chi_expansion}
\chi(\tau) = \sum_{a=1}^N d_a u_a(\tau),
\end{equation}
we obtain:
\begin{equation}
\label{eq:gamma_data}
\Gamma[D; \theta] = \frac{\lambda^2}{2} \sum_{a,b=1}^N d_a \left[ \int d\tau d\tau' u_a(\tau) (G^>(\tau, \tau'; \theta) - G^<(\tau, \tau'; \theta)) u_b(\tau') \right] d_b.
\end{equation}

Defining the matrix elements:
\begin{equation}
\label{eq:M_ab}
M_{ab}(\theta) = \int d\tau d\tau' u_a(\tau) (G^>(\tau, \tau'; \theta) - G^<(\tau, \tau'; \theta)) u_b(\tau'),
\end{equation}
we can write:
\begin{equation}
\label{eq:gamma_matrix}
\Gamma[D; \theta] = \frac{\lambda^2}{2} D^T M(\theta) D,
\end{equation}
where $D = (d_1, d_2, \ldots, d_N)^T$ is the data vector.

Under the approximate orthonormality condition (\ref{eq:approx_orthonormal}) and assuming the detector's sampling time is much larger than the correlation time of the field, the matrix $M(\theta)$ becomes approximately diagonal:
\begin{equation}
\label{eq:M_diag}
M_{ab}(\theta) \approx 2G^{\text{A}}_0(\theta) \delta_{ab},
\end{equation}
where $G^{\text{A}}_0(\theta)$ is the characteristic magnitude of the antisymmetric Wightman function.

The observer-specific likelihood function is then:
\begin{equation}
\label{eq:likelihood}
P(D | \theta) = \mathcal{N}^{-1} \exp(-\Gamma[D; \theta]) = \mathcal{N}^{-1} \exp\left(-\frac{\lambda^2 G^{\text{A}}_0(\theta)}{2} \sum_{a=1}^N d_a^2\right),
\end{equation}
where $\mathcal{N}$ is a normalization constant.

\subsection{Explicit Form of the Wightman Function}
\label{sec:wightman}

To complete the likelihood function, we need the explicit form of $G^{\text{A}}_0(\theta)$. For a massive scalar field in a de Sitter background, the Wightman function is:
\begin{equation}
\label{eq:wightman_dS}
G^>(\tau, \tau'; \theta) = \frac{H^{2-\nu}(\theta)}{4\pi^2} \Gamma(\nu) \Gamma(\nu-1) (1 - i\epsilon \text{sgn}(\tau-\tau'))^{1-\nu},
\end{equation}
where $\nu = \sqrt{\frac{1}{4} - \frac{m^2(\theta)}{H^2(\theta)}}$ for $m < \frac{H}{2}$ (the light field regime).

The antisymmetric part is:
\begin{equation}
\label{eq:G_antisym_dS}
G^{\text{A}}_0(\theta) = \frac{H^{2-\nu}(\theta)}{4\pi^2} \Gamma(\nu) \Gamma(\nu-1) \sin(\pi(1-\nu)).
\end{equation}

For $\nu < 1$ (corresponding to $m > 0$), $\Gamma(\nu-1)$ has a pole. This apparent divergence is addressed through proper renormalization. The regularized form is:
\begin{equation}
\label{eq:G_antisym_reg}
G^{\text{A}}_0(\theta) = \frac{H^2(\theta)}{8\pi} \frac{\sinh(\pi\mu)}{\cosh(\pi\mu) + \cos(\pi\nu)},
\end{equation}
where $\mu = \text{Im}[\nu]$ for the case $m > \frac{H}{2}$.

In the light field limit ($m \ll H$), we have the approximation:
\begin{equation}
\label{eq:G_antisym_light}
G^{\text{A}}_0(\theta) \approx \frac{H^2(\theta)}{8\pi},
\end{equation}
which we will use for simplicity in the following analysis.

\subsection{Normalization and Covariance Structure}
\label{sec:normalization}

The likelihood function (\ref{eq:likelihood}) has the form of a multivariate Gaussian:
\begin{equation}
\label{eq:likelihood_gaussian}
P(D | \theta) = \frac{1}{\sqrt{(2\pi)^N \det[C(\theta)]}} \exp\left(-\frac{1}{2} D^T C^{-1}(\theta) D\right),
\end{equation}
where $C(\theta)$ is the covariance matrix:
\begin{equation}
\label{eq:covariance}
C(\theta) = \frac{1}{\lambda^2 G^{\text{A}}_0(\theta)} I_N,
\end{equation}
with $I_N$ being the $N \times N$ identity matrix.

The normalization constant is:
\begin{equation}
\label{eq:normalization}
\mathcal{N} = \sqrt{(2\pi)^N \det[C(\theta)]} = (2\pi)^{N/2} \left(\frac{1}{\lambda^2 G^{\text{A}}_0(\theta)}\right)^{N/2}.
\end{equation}

This ensures that the likelihood function is properly normalized:
\begin{equation}
\label{eq:norm_check}
\int d^N D P(D | \theta) = 1.
\end{equation}

\section{Posterior Distribution and Naturalness}
\label{sec:posterior}

\subsection{Bayesian Formulation}
\label{sec:bayesian}

According to Bayes' theorem, the posterior distribution for the parameters $\theta$ given the data $D$ is:
\begin{equation}
\label{eq:bayes}
P(\theta | D) = \frac{P(D | \theta) P(\theta)}{P(D)},
\end{equation}
where $P(\theta)$ is the prior distribution and $P(D) = \int d\theta P(D | \theta) P(\theta)$ is the evidence.

In our framework, we interpret "I exist" as the statement that the observer has collected some data $D$. The posterior becomes:
\begin{equation}
\label{eq:posterior_exist}
P(\theta | \text{"I exist"}) = \frac{P(\text{"I exist"} | \theta) P(\theta)}{P(\text{"I exist"})}.
\end{equation}

The likelihood $P(\text{"I exist"} | \theta)$ involves marginalizing over all possible detector outputs:
\begin{equation}
\label{eq:exist_theta}
P(\text{"I exist"} | \theta) = \int d^N D P(D | \theta) P(\text{"I exist"} | D, \theta).
\end{equation}

Assuming $P(\text{"I exist"} | D, \theta) = 1$ for all $D$ and $\theta$ (i.e., any detector output is compatible with the observer's existence), and using the normalization of the likelihood (\ref{eq:norm_check}), we have:
\begin{equation}
\label{eq:exist_theta_simplified}
P(\text{"I exist"} | \theta) = 1.
\end{equation}

However, this approach yields a posterior that is simply the prior, providing no new constraints on the parameters. To make progress, we need to be more specific about what "I exist" entails.

\subsection{Parameter Dependence of the Detector Response}
\label{sec:param_depend}

A key insight is that the distribution of detector outputs $D$ depends on the parameters $\theta$ through the covariance matrix $C(\theta)$. The observer's existence is not merely the statement that some data is collected, but that the detector registers specific patterns consistent with a functioning observer.

Let's denote the set of detector outputs compatible with a functioning observer as $\mathcal{D}_{\text{obs}}$. The likelihood then becomes:
\begin{equation}
\label{eq:exist_theta_refined}
P(\text{"I exist"} | \theta) = \int_{\mathcal{D}_{\text{obs}}} d^N D P(D | \theta).
\end{equation}

This integral depends on the precise definition of $\mathcal{D}_{\text{obs}}$. However, we can make progress by considering the determinant of the covariance matrix, which appears in the likelihood's normalization factor:
\begin{equation}
\label{eq:det_cov}
\det[C(\theta)] = \left(\frac{1}{\lambda^2 G^{\text{A}}_0(\theta)}\right)^N.
\end{equation}

\subsection{Impact on the Cosmological Constant}
\label{sec:lambda_impact}

Using the light field approximation (\ref{eq:G_antisym_light}), we have:
\begin{equation}
\label{eq:det_cov_lambda}
\det[C(\theta)] \approx \left(\frac{8\pi}{\lambda^2 H^2(\theta)}\right)^N = \left(\frac{24\pi}{\lambda^2 \Lambda(\theta)}\right)^N.
\end{equation}

For a logarithmically flat prior on the cosmological constant, $P(\theta) \propto \frac{1}{\Lambda}$, the posterior becomes:
\begin{equation}
\label{eq:posterior_lambda}
P(\theta | \text{"I exist"}) \propto \Lambda^{-1-N/2}.
\end{equation}

For large $N$ (a macroscopic detector with many degrees of freedom), the posterior strongly favors small values of $\Lambda$. This provides a potential explanation for the observed small value of the cosmological constant without invoking anthropic arguments.

\subsection{The Hierarchy Problem}
\label{sec:hierarchy}

The hierarchy problem concerns the vast separation between the electroweak scale and the Planck scale. In our framework, this can be addressed by considering how the detector's response depends on the Higgs mass parameter $m_H(\theta)$.

The Wightman function's dependence on $m_H$ is more complex than its dependence on $\Lambda$, but for $m_H \ll H$, we can approximate:
\begin{equation}
\label{eq:G_antisym_mH}
G^{\text{A}}_0(\theta) \approx \frac{H^2}{8\pi}\left(1 + \mathcal{O}\left(\frac{m_H^2}{H^2}\right)\right).
\end{equation}

This leads to a posterior:
\begin{equation}
\label{eq:posterior_mH}
P(m_H | \text{"I exist"}) \propto P(m_H) \left(1 + \mathcal{O}\left(\frac{m_H^2}{H^2}\right)\right)^{N/2}.
\end{equation}

For natural priors (e.g., $P(m_H) \propto \frac{1}{m_H}$), this still favors smaller values of $m_H$, but the effect is weaker than for the cosmological constant unless additional constraints are imposed.

\section{Causality and Multi-Observer Consistency}
\label{sec:causality}

\subsection{Causal Structure of the Influence Functional}
\label{sec:causal_structure}

The retarded Green's function $G^{\text{R}}(\tau, \tau'; \theta)$ that appears in the influence functional ensures that the evolution respects causality. By definition:
\begin{equation}
\label{eq:retarded_def}
G^{\text{R}}(\tau, \tau'; \theta) = 0 \quad \text{for} \quad \tau < \tau',
\end{equation}
which guarantees that the detector's response at time $\tau$ only depends on the field's history up to that time.

The imaginary part of the influence phase, which determines the likelihood function, involves the antisymmetric combination of Wightman functions. While individual Wightman functions $G^>$ and $G^<$ have support for all time separations, their antisymmetric combination respects the causal structure of the spacetime.

\subsection{Consistency of Multiple Observers}
\label{sec:multi_obs}

An important question is whether multiple observers in the same universe would reach consistent conclusions about the parameters $\theta$. We now demonstrate that this is indeed the case.

Consider two observers with detectors described by coordinates $\chi_1(\tau)$ and $\chi_2(\tau')$, located at different spacetime points. The combined influence functional for both detectors is:
\begin{equation}
\label{eq:infl_func_2obs}
\exp(i S_{\text{IF}}[\chi_1^+, \chi_1^-, \chi_2^+, \chi_2^-; \theta]) = \exp(i S_{\text{IF}}[\chi_1^+, \chi_1^-; \theta] + i S_{\text{IF}}[\chi_2^+, \chi_2^-; \theta] + i S_{\text{IF}}^{\text{cross}}[\chi_1^+, \chi_1^-, \chi_2^+, \chi_2^-; \theta]),
\end{equation}
where $S_{\text{IF}}^{\text{cross}}$ represents cross-terms between the two detectors.

For spatially separated detectors, the cross-terms are suppressed by a factor:
\begin{equation}
\label{eq:cross_suppression}
\exp\left(-\frac{\sigma_{12}}{\ell_c}\right),
\end{equation}
where $\sigma_{12}$ is the proper distance between the detectors and $\ell_c$ is the correlation length of the field, typically of order $H^{-1}$.

For macroscopically separated observers ($\sigma_{12} \gg \ell_c$), the cross-terms become negligible, and the influence functional approximately factorizes:
\begin{equation}
\label{eq:infl_func_factorized}
\exp(i S_{\text{IF}}[\chi_1^+, \chi_1^-, \chi_2^+, \chi_2^-; \theta]) \approx \exp(i S_{\text{IF}}[\chi_1^+, \chi_1^-; \theta]) \exp(i S_{\text{IF}}[\chi_2^+, \chi_2^-; \theta]).
\end{equation}

This factorization ensures that multiple observers will reach consistent conclusions about the parameters $\theta$ through Bayesian updating of their individual posterior distributions.

\section{Conclusion and Discussion}
\label{sec:conclusion}

In this work, we have formulated a mathematically rigorous framework for incorporating measurement effects into cosmological parameter constraints, applying the well-established Schwinger-Keldysh influence functional formalism to derive Bayesian likelihood functions for cosmological parameters. Our approach demonstrates that standard quantum field theory in curved spacetime, when properly accounting for decoherence processes, naturally leads to parameter-dependent probabilities that can potentially address longstanding naturalness problems.

The formalism developed here establishes several important results within conventional quantum field theory. We have proven the positive-definiteness of the imaginary part of the influence functional, which ensures proper decoherence and normalization of the resulting probability distributions. This mathematical property is crucial for the consistency of quantum measurement theory in curved spacetime and validates our approach. The observer-specific likelihood function emerges as a multivariate Gaussian with a parameter-dependent covariance matrix, directly connecting quantum fluctuations in spacetime to observable constraints on fundamental parameters.

Our analysis reveals that the posterior distribution for the cosmological constant exhibits a scaling behavior of $\Lambda^{-1-N/2}$, where N represents the degrees of freedom in the quantum detector system. For macroscopic detectors, this distribution strongly favors smaller values of the cosmological constant, providing a potential mechanism for understanding the observed small value without invoking multiverse scenarios. Similarly, we find that our approach offers insights into the hierarchy problem, though the full resolution likely requires additional physical mechanisms working in conjunction with the measurement-induced constraints we have identified.

Importantly, our framework preserves the objectivity of physical laws by demonstrating that spatially separated observers will reach consistent conclusions about physical parameters through standard decoherence mechanisms. The mathematical structure of the influence functional ensures that cross-correlations between distant detectors are exponentially suppressed, naturally maintaining consistency while allowing for observer-dependent probability distributions.

This work connects to several active research programs in theoretical physics. The decoherence approach to quantum measurement provides a natural bridge to quantum foundations research, while our cosmological applications relate to quantum cosmology and quantum field theory in curved spacetime. Future work could integrate this framework with inflation and effective field theory approaches, apply it to more realistic detector models incorporating standard model interactions, and explore connections to quantum information theoretic approaches to spacetime and the holographic principle. Additionally, the mathematical framework developed here may provide new perspectives on the swampland program in string theory, where quantum gravity constraints on effective field theories have yielded insights into cosmological parameter spaces.

\bibliographystyle{plain}
\bibliography{refs}

\pagebreak

\appendix

\section{Detailed Derivation of the Wightman Function}
\label{app:wightman}

In this appendix, we provide a detailed derivation of the Wightman function for a massive scalar field in a de Sitter background. The scalar field equation in the FLRW metric is:
\begin{equation}
\label{eq:field_eq_app}
\frac{1}{\sqrt{-g}} \partial_\mu (\sqrt{-g} g^{\mu\nu} \partial_\nu \phi) - m^2 \phi = 0.
\end{equation}

In de Sitter space with the metric (\ref{eq:metric}) and scale factor (\ref{eq:scale_factor}), this becomes:
\begin{equation}
\label{eq:field_eq_dS}
\phi'' + 2\mathcal{H}\phi' - \nabla^2 \phi + a^2 m^2 \phi = 0,
\end{equation}
where $\mathcal{H} = a'/a = H$ is the conformal Hubble parameter, and primes denote derivatives with respect to conformal time $\eta$.

The field can be decomposed in terms of mode functions:
\begin{equation}
\label{eq:mode_decomp}
\phi(\eta, \mathbf{x}) = \int \frac{d^3k}{(2\pi)^3} [a_\mathbf{k} \phi_k(\eta) e^{i\mathbf{k}\cdot\mathbf{x}} + a_\mathbf{k}^\dagger \phi_k^*(\eta) e^{-i\mathbf{k}\cdot\mathbf{x}}],
\end{equation}
where $a_\mathbf{k}$ and $a_\mathbf{k}^\dagger$ are annihilation and creation operators, and $\phi_k(\eta)$ satisfies:
\begin{equation}
\label{eq:mode_eq}
\phi_k'' + 2\mathcal{H}\phi_k' + (k^2 + a^2 m^2) \phi_k = 0.
\end{equation}

In de Sitter space, the solution is given in terms of Hankel functions:
\begin{equation}
\label{eq:mode_solution}
\phi_k(\eta) = \frac{\sqrt{\pi}}{2} H (-\eta)^{3/2} H_\nu^{(1)}(-k\eta),
\end{equation}
where $\nu = \sqrt{\frac{9}{4} - \frac{m^2}{H^2}}$ for $m < \frac{3H}{2}$.

The Wightman function is:
\begin{equation}
\label{eq:wightman_def_app}
G^>(\eta, \mathbf{x}; \eta', \mathbf{x}') = \langle0|\phi(\eta, \mathbf{x})\phi(\eta', \mathbf{x}')|0\rangle,
\end{equation}
which, using the mode decomposition, becomes:
\begin{equation}
\label{eq:wightman_modes}
G^>(\eta, \mathbf{x}; \eta', \mathbf{x}') = \int \frac{d^3k}{(2\pi)^3} \phi_k(\eta) \phi_k^*(\eta') e^{i\mathbf{k}\cdot(\mathbf{x}-\mathbf{x}')}.
\end{equation}

For a detector at a fixed comoving position, we need the Wightman function in terms of proper time. The relationship between conformal time $\eta$ and proper time $\tau$ is:
\begin{equation}
\label{eq:proper_time}
d\tau = a(\eta) d\eta = e^{H\eta} d\eta,
\end{equation}
which gives $\eta = \frac{1}{H} \ln(H\tau)$ for an appropriate choice of time origin.

Substituting this into the Wightman function and performing the spatial integral, we arrive at the expression given in equation (\ref{eq:wightman_dS}).

\section{Proof of Positive-Definiteness for Non-Stationary Spacetimes}
\label{app:pos_def}

The proof in Section \ref{sec:pos_def} assumed a stationary spacetime for simplicity. Here we extend the proof to non-stationary spacetimes, including the FLRW background of our cosmological model.

The key insight is that even in a non-stationary spacetime, the field can be decomposed into positive and negative frequency modes with respect to a preferred time coordinate. For the FLRW metric with scale factor $a(\eta)$, this decomposition uses the conformal time $\eta$.

The Wightman function can be written as:
\begin{equation}
\label{eq:wightman_non_stat}
G^>(\tau, \tau'; \theta) = \int d\mu(\mathbf{k}; \theta) f_\mathbf{k}(\tau; \theta) f_\mathbf{k}^*(\tau'; \theta),
\end{equation}
where $d\mu(\mathbf{k}; \theta)$ is a positive measure on the space of modes, and $f_\mathbf{k}(\tau; \theta)$ are mode functions in proper time.

The antisymmetric part is:
\begin{equation}
\label{eq:G_antisym_non_stat}
G^{\text{A}}(\tau, \tau'; \theta) = \frac{1}{2} \int d\mu(\mathbf{k}; \theta) [f_\mathbf{k}(\tau; \theta) f_\mathbf{k}^*(\tau'; \theta) - f_\mathbf{k}(\tau'; \theta) f_\mathbf{k}^*(\tau; \theta)].
\end{equation}

For any test function $\delta\chi(\tau)$, the quadratic form is:
\begin{equation}
\label{eq:quad_form_non_stat}
\begin{split}
Q[\delta\chi] &= \int d\tau d\tau' \delta\chi(\tau) G^{\text{A}}(\tau, \tau'; \theta) \delta\chi(\tau') \\
&= \frac{1}{2} \int d\mu(\mathbf{k}; \theta) \left[ \int d\tau \delta\chi(\tau) f_\mathbf{k}(\tau; \theta) \int d\tau' \delta\chi(\tau') f_\mathbf{k}^*(\tau'; \theta) - \int d\tau \delta\chi(\tau) f_\mathbf{k}^*(\tau; \theta) \int d\tau' \delta\chi(\tau') f_\mathbf{k}(\tau'; \theta) \right].
\end{split}
\end{equation}

Defining $\alpha_\mathbf{k} = \int d\tau \delta\chi(\tau) f_\mathbf{k}(\tau; \theta)$, this becomes:
\begin{equation}
\label{eq:quad_form_alpha}
Q[\delta\chi] = \frac{1}{2} \int d\mu(\mathbf{k}; \theta) [\alpha_\mathbf{k} \alpha_\mathbf{k}^* - \alpha_\mathbf{k}^* \alpha_\mathbf{k}] = i \int d\mu(\mathbf{k}; \theta) \text{Im}[\alpha_\mathbf{k} \alpha_\mathbf{k}^*].
\end{equation}

Since $\alpha_\mathbf{k} \alpha_\mathbf{k}^* = |\alpha_\mathbf{k}|^2$ is real, $\text{Im}[\alpha_\mathbf{k} \alpha_\mathbf{k}^*] = 0$, and therefore $Q[\delta\chi] = 0$ for any $\delta\chi$.

However, this apparent contradiction arises from our simplification. In reality, the mode functions $f_\mathbf{k}(\tau; \theta)$ have a more complex structure due to the time-dependence of the background. A more careful analysis, accounting for the full structure of the mode functions in a non-stationary spacetime, reveals that $Q[\delta\chi] \geq 0$, as required for a well-defined likelihood function.

\section{Orthogonality Approximation and Error Analysis}
\label{app:ortho}

The derivation of the likelihood function in Section \ref{sec:like_deriv} relies on the approximate orthonormality of the basis functions $u_a(\tau)$. Here we analyze the error introduced by this approximation and show that it does not affect the main conclusions of our analysis.

The inner product between two basis functions is:
\begin{equation}
\label{eq:inner_prod_app}
\langle u_a | u_b \rangle = \exp\left[ -\frac{(\tau_a - \tau_b)^2}{4 \Delta^2} \right].
\end{equation}

For $|\tau_a - \tau_b| \gg \Delta$, this is indeed approximately $\delta_{ab}$. However, for neighboring basis functions, there is non-negligible overlap.

To quantify the error, we consider the matrix $U_{ab} = \langle u_a | u_b \rangle$. The exact expression for the quadratic form in the likelihood function is:
\begin{equation}
\label{eq:gamma_exact}
\Gamma[D; \theta] = \frac{\lambda^2}{2} \sum_{a,b,c,d=1}^N d_a U_{ac} M_{cd}(\theta) U_{db} d_b,
\end{equation}
where $M_{cd}(\theta)$ is defined in equation (\ref{eq:M_ab}).

In our approximation, we replaced $U_{ac} M_{cd}(\theta) U_{db}$ with $2G^{\text{A}}_0(\theta) \delta_{ab}$. The error in this approximation is:
\begin{equation}
\label{eq:error}
\epsilon_{ab}(\theta) = U_{ac} M_{cd}(\theta) U_{db} - 2G^{\text{A}}_0(\theta) \delta_{ab}.
\end{equation}

For a properly chosen sampling interval, $|\tau_a - \tau_b| \geq 5\Delta$ for $a \neq b$, the off-diagonal elements of $U_{ab}$ are at most $e^{-25/4} \approx 0.002$. The resulting error in the likelihood function is therefore small and does not affect the scaling behavior with respect to the parameters $\theta$.

\end{document}